\documentclass{article}

\usepackage{arxiv}

\usepackage[utf8]{inputenc} 
\usepackage[T1]{fontenc}    
\usepackage{url}            
\usepackage{booktabs}       
\usepackage{amsfonts}       
\usepackage{nicefrac}       
\usepackage{microtype}      
\usepackage{lipsum}

\usepackage{relsize,balance,lipsum,bbm,enumerate,times,comment,color,graphicx,setspace,mathdots,mathrsfs,amssymb,latexsym,amsfonts,amsmath,cite,stmaryrd,caption,pgf,accents,mathtools,tabu,enumitem,hhline,array,epstopdf,nicefrac,amsthm,microtype,algorithmic,array,float,bm,url}

\usepackage{graphicx}

\usepackage{mathtools} 

\usepackage[utf8]{inputenc}
\usepackage[english]{babel}

\usepackage{tikz}
\usetikzlibrary{positioning}
\definecolor{mygreen}{RGB}{153,255,153}
\definecolor{myorange}{RGB}{255,178,102}
\definecolor{myred}{RGB}{255,153,153}
\definecolor{myblue}{RGB}{153,204,255}

\title{Characterization of Potential Games: Application in Aggregative Games}

\author{
    Sina~Arefizadeh*\\
    Dept. of Electrical and Computer Engineering\\
    Arizona State University\\
    Tempe, Arizona, USA\\
    \texttt{sarefiza@asu.edu}
\And
    Angelia~Nedi\'c\\
    Dept. of Electrical and Computer Engineering\\
    Arizona State University\\
    Tempe, Arizona, USA\\
    \texttt{Angelia.Nedich@asu.edu}
}

\theoremstyle{definition}
\newtheorem{lemma}{Lemma}
\newtheorem{thm}{Theorem}

\newtheorem{definition}{Definition}
\newtheorem{rem}{Remark}

\begin{document}
\maketitle
\begin{abstract}
The main objective of this work is to describe games which fall under title of Potential and simplify the conditions for class of aggregative games. Games classified as aggregative are ones in which, in addition to the player's own action, the payoff for each player depends on an aggregate of all the players' decision variables. In this study, we developed a method based on payoff functions to determine if a given game is potential. Then, in order to identify the Aggregative Games that fall under this class we simplified the criteria for the class of Aggregative Games. A $3$-player Cournot game, also known as an Aggregative Potential Game, is used to test the characterization criteria for Potential Games. A $4$-player Cournot game is also utilized to test the form of potential function we obtained for class of general potential games. 

\end{abstract}

\begin{keywords} 
 \ Potential Games; Aggregative Games; Aggregative Potential Games Characterization
\end{keywords}

\section{Introduction}

Study of potential games has long been topic of interests due to the nice equilibrium behavior of this class of games. However, prior to this study there were no focus on understanding the behavior of this class of games and generally characterizing them regardless of payoff continuity and differentiablity. After a complete characterization of class of potential games, we can check particular classes of games that under what conditions they fall under potential games. For this aim we proposed a necessary and sufficient condition for this class of potential games in terms of sub game with every two player involved in the game. Then we stepped forward to simplify this criteria for class of aggregative potential games.

In the category of non-cooperative games, each player's payoff function may be influenced by the other agents' decisions. The payoff functions, however, in a wide range of games depend on some function of the aggregate decision variable of all agents. As opposed to other agents' individual supply, the payoff function of players in the Cournot game depends on the overall supply of opponents. In recent years, this category of games has attracted interest from a wide range of fields, including electrical engineering and transportation science. References \cite{C4,C5,C7,C8,C9,C10,C11} are examples of such studies.

The concept of anonymous player interaction is the cornerstone of the idea of aggregative games. This feature greatly simplifies the game for the participants and the study of the game for those who employ game theory as a tool. To establish my own decision variable, I just need to forecast the aggregate decision of all other agents from the perspective of the player. From the standpoint of an analyst, it is not necessary to concentrate on predicting what will happen with every possible value of a decision variable used by every other agents.

In the study of games, the Nash equilibrium solution concept is significant. The ultimate objective of several research, including \cite{C1,C20}, is to develop algorithms that converge to the Nash equilibrium of games. Technically speaking, confirming the Nash equilibrium's existence is the first stage in this series of studies. It is not difficult to guarantee that Nash equilibrium exists in the class of potential games. In fact, it is simple to demonstrate that there is a Nash equilibrium in this type of games with a compact action space and continuous payoff functions by looking at the potential function. 

Although there is already a strong mathematical foundation for studying exact one differential forms for potential games, describing potential games in general regardless of continuity of payoff functions still was open. As a result, characterization of aggregative potential games, has not received as much attention. Although the concept of Best Reply Potential Games in \cite{C22} is taken into consideration in the publication \cite{C5}, this definition may be debatable in and of itself. In fact, since the best response is a dynamic and dynamics often exist independently of games, the concept of inventing a new class of games based on best response correspondence may be misleading. Because such game may not be necessarily definable in term of payoff functions. Additionally, the existence of such a Best Reply Potential may be interesting for ensuring convergence of a specific dynamics, such as the best response or better response to a Nash equilibrium, but it may not be useful in some real-world situations where we need to know the form of this function to determine the game's Nash equilibrium or where we want to provide convergence analysis for another type of dynamics other than the better or best response.

We completely characterize the class of potential games in this study. To achieve this, we first examine the problem of describing potential games in the multidimensional case to comprehend how the potential function behaves in the multidimensional case where we have already discovered the potential function, keeping in mind that the the condition derived for potential games in terms of every pair of payoff functions given by \cite{monderer1996potential} only applies to the one dimensional case and one differential forms only applies to one dimensional variables. The next step is to gather some criteria for potential games in multidimensional action spaces. Then, we concentrate on the criterion and make it more simple for class of potential aggregative games.

In this paper, section \ref{Se_Nom} is dedicated to the notions and terminologies. We referred to the technical notations and some required definitions in this section. Section \ref{Prelim} is considered to mention some preliminary concepts regarding potential games in multi dimensional action spaces. This section follows by section \ref{Main-res} which is dedicated to the characterization of Potential Games and applications to aggregative games. Section \ref{sec:discussion} proposes a discussion on how the results can be applied to study the class of aggregative potential games. Finally, we conclude the paper in section \ref{sec:conclusion}.

\section{Notions and Terminology} \label{Se_Nom}
In this section we define some definitions and terminologies about the games which are going to be used in the rest of the manuscript. A game consists of $N$ agents represented by the set $\mathcal{N}:=\{1,\dots,N\}$. Each agent $i\in\mathcal{N}$ selects an action $x_i$ over a set of actions $K_i \in \mathbb{R}^n$ to minimize its payoff function $f_i:K_i \times K_{-i} \to \mathbb{R}$ where $K_{-i}:= \prod_{j \neq i} K_j$. A game $\Gamma$ is a tuple of agents $\mathcal{N}$, action space $K^N:=\prod_{i \in \mathcal{N}} K_i$, and payoff functions $f_i(\cdot)$ for $i \in \mathcal{N}$.
Class of potential games is defined in the following.
\begin{definition}[ Potential Games] \label{def_potential}
A game $\Gamma$ is an potential game, if there exists a function $f: K^N \rightarrow \mathbb{R}$ such that the following relation holds for all agents $i \in \mathcal{N}$,
\begin{equation}\label{eq_def_potential}
    f_i(x'_i,x_{-i})-f_i(x_i,x_{-i})=f(x'_i,x_{-i})-f(x_i,x_{-i})
\end{equation}
where $x_i' \in K_i$ and $x_i\in K_i$ and $x_{-i} \in K_{-i}$. The corresponding function $f: K^N \rightarrow \mathbb{R}$ is called the  potential function of the game $\Gamma$. 
\end{definition}

Finding the complete characterization of potential games  and the relationship between aggregative games with this class of potential games is the primary research questions of this work. In order to do this, we must also establish the idea of aggregative games. For this aim let $\bar{x}=\sum_{i=1}^N x_i$ denote the aggregate of all players decisions. We use $\bar{x}_{-i}$ to denote the aggregate of all players' decisions except for player $i$, i.e., $$\bar{x}_{-i}=\sum_{j=1,j\ne i}^N x_j.$$ 
Let us define the Minkowski sum of the sets $K_i$ with $\bar{K}$ as follows:
\begin{equation}\label{k_bar}
    \bar{K}\triangleq \sum_{i=1}^{N}K_i.
\end{equation}
and let $\bar{x}$ be the aggregate of players decisions $x_i$, i.e., 
\begin{equation}\label{x_bar}
    \bar{x}\triangleq \sum_{j=1}^N x_j=x_i+\bar{x}_{-i},\qquad \bar{x}\in \bar{K}.
\end{equation}
Having $\bar{x}_{-i}$, player $i$ is confronted with the following optimization problem:
\begin{eqnarray}\label{game}
    &&\min f_i(x_i,x_{-i})\triangleq \tilde{f}_i\big(x_i,g(\bar{x})\big), \nonumber\\ 
    && s.t.\ \ \  x_i \in K_i,
\end{eqnarray}
where $g:\bar{K}\rightarrow \mathbb{R}^m$ for some $m\in \mathbb{N}$. Therefore, following definition make sense for aggregative games. 
\begin{definition}[Aggregative Games] \label{aggregative_games}
Game $\Gamma=(\mathcal{N},\{\tilde{f}_i,K_i\}_{i\in\mathcal{N}})$ is aggregative.
\end{definition}
It should be noted that derivation of conditions obtained in \cite{monderer1996potential} for games to be potential, which is based on one differential forms \cite{Rudin-Analysis} are limited to single dimensional action spaces. We focus at the following section on the class of potential games in the multidimensional case. 

\section{Preliminaries}\label{Prelim}
This section look at the problem of conservative vector field through exact one-forms point of view. Let us write $\mathbf{a}=(a_1,a_2,\ldots,a_I)$.
 One-form $w_\mathbf{a}(v)$ in general at arbitrary point $\mathbf{a}$ in Euclidean space $\mathbb{R}^I$ are linear functional on the space of tangent vectors $v$ at the point $\mathbf{a}$. Considering tangent vector $d\mathbf{a}$ at point $\mathbf{a}$ in Euclidean space there is a unique function $F:\mathbb{R}^I\rightarrow\mathbb{R}^I$ such that
\begin{equation}\label{one-form}
    w_\mathbf{a}(d\mathbf{a})=F(\mathbf{a})\cdot d\mathbf{a}
\end{equation}

Following theorem give us precise conditions for a one-differential form to be exact over convex sets in $\mathbb{R}^I$. 
\begin{thm}\label{exact-form}
Let $u_i$ be some class $C^1$ functions on the convex set $E\subset \mathbb{R}^I $ for some $I\in \mathbb{N}$, then 1-form $\omega=\sum_{i=1}^{I} u_i(\mathbf{a})da_i$ for $i \in \{1,2,\ldots,I\}$ for $\mathbf{a}\in E$ is exact if and only if we have 
\begin{equation}\label{nec-suf-exact}
    D_j u_i(\mathbf{a})=D_i u_j(\mathbf{a}),
\end{equation}
for all $i,j \in \{1,2,\ldots,I\}$, where $D_i$ is the partial derivative with respect to $a_i$.
\end{thm}
\begin{proof}
The proof of this theorem immediately follows from Remark 10.35 a) and Theorem 10.39 of Rudin's book \cite{Rudin-Analysis}.
\end{proof}
Considering this fact that for vector field $F:\mathbb{R}^I\rightarrow\mathbb{R}^I$ we can uniquely define one-form as it is introduced in \eqref{one-form}, one can introduce a one-form for each arbitrary game by viewing the concatenation of derivatives of cost function of each agent with respect to its own decision variable as a vector field. For example, for a game $\Gamma=(\mathcal{N},\{K_i,f_i\}_{i\in\mathcal{N}})$ with one dimensional action space one can consider vector field $$G(x)=(\frac{\partial f_1}{\partial x_1},\frac{\partial f_2}{\partial x_2},\ldots,\frac{\partial f_N}{\partial x_N}),$$
and assign one-form $G(x)\cdot dx$ to this game $\Gamma$. Based on the definition of the exact one-form if the one-form corresponding to this game is exact, the game is potential because one can write $G(x)\cdot dx= d\phi$ for some scalar function $\phi:\mathbb{R}^I\rightarrow\mathbb{R}$.
From now on, whenever we use the term curve, we denote oriented curve with parameterization $t$ unless otherwise specified. Considering $K_i\subset \mathbb{R}^n$ we define 1-differential forms in the extended action space in the sequel. Let us consider curve 
\begin{align*}
    \gamma_i(t)= &\big(\gamma_{11}(t_0),\ldots,\gamma_{1n}(t_0),\gamma_{i1}(t),\ldots,\nonumber \\
    &\gamma_{in}(t),\gamma_{N1}(t_0),\ldots,\gamma_{Nn}(t_0)\big)\in \mathbb{R}^{Nn},
\end{align*}
for all $i\in \mathcal{N}$, $t\in \mathbb{R}$, and some $t_0\in \mathbb{R}$. Clearly, for all $i,j\in \mathcal{N}$ we can write $\gamma_i(t_0)=\gamma_j(t_0)$. We call this point as $\gamma(t_0)$.

In the following we describe Potential Games in parametric sense. For arbitrary curves $\gamma_i(t)$ for all $i\in \mathcal{N}$ and $t\in \mathbb{R}$, game $\Gamma$ is potential if there is a scalar function $\phi$ such that 
\begin{equation}\label{eq-MDAS-pot}
    \frac{df_i\big(\gamma_i(t)\big)}{dt}=\frac{d\phi\big(\gamma_i(t)\big)}{dt}
\end{equation}
for all $i\in \mathcal{N}$ and $t\in \mathbb{R}$. We call $\phi$ as potential function. Latter term describes a parametric version of Potential Games. In the sequel of the manuscript we may refer to these view points alternatively. 
\begin{lemma}
A game is potential game if and only if for all $i\in \mathcal{N}$ and $m \in \{1,2,\ldots,n\}$ we have $\frac{\partial f_i}{\partial a_{im}}=\frac{\partial \phi}{\partial a_{im}}$ for some scalar function $\phi$.
\end{lemma}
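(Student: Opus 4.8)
The plan is to prove both implications by linking the coordinate-wise partial-derivative condition to the parametric formulation in \eqref{eq-MDAS-pot}, which the paper treats as equivalent to Definition \ref{def_potential}. Throughout I would assume each $f_i$ and the candidate $\phi$ are of class $C^1$ on the (convex, hence connected) action space $K^N$, so that all partial derivatives exist and the fundamental theorem of calculus is available along line segments.

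For the forward direction, I would suppose the game is potential, so that some scalar $\phi$ satisfies \eqref{eq-MDAS-pot} for every admissible curve $\gamma_i$. Fixing $i \in \mathcal{N}$ and $m \in \{1,\ldots,n\}$, I would choose a curve $\gamma_i(t)$ that moves only the coordinate $a_{im}$ of player $i$ while freezing the remaining coordinates at $\gamma(t_0)$. Applying the chain rule to both sides of \eqref{eq-MDAS-pot} and evaluating at $t_0$ collapses each total derivative to a single term, giving $\frac{\partial f_i}{\partial a_{im}}\,\dot\gamma_{im}(t_0) = \frac{\partial \phi}{\partial a_{im}}\,\dot\gamma_{im}(t_0)$. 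Since $\dot\gamma_{im}(t_0)$ can be chosen nonzero and the base point $\gamma(t_0)$ ranges over all of $K^N$, dividing through yields $\frac{\partial f_i}{\partial a_{im}} = \frac{\partial \phi}{\partial a_{im}}$ everywhere. Equivalently, I could argue directly from \eqref{eq_def_potential} by fixing $x_{-i}$, letting $x_i'$ differ from $x_i$ only in its $m$-th component, and passing to the limit in the resulting difference quotient.

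For the converse, I would suppose $\frac{\partial f_i}{\partial a_{im}} = \frac{\partial \phi}{\partial a_{im}}$ for all $i$ and $m$ and some scalar $\phi$, and for fixed $i$ and $x_{-i}$ define $\psi(x_i) := f_i(x_i,x_{-i}) - \phi(x_i,x_{-i})$. The hypothesis forces every partial derivative of $\psi$ with respect to $a_{i1},\ldots,a_{in}$ to vanish on $K_i$. For arbitrary $x_i, x_i' \in K_i$, convexity lets me join them by the segment $x_i + s(x_i'-x_i)$, $s\in[0,1]$; differentiating $\psi$ along this segment and invoking the chain rule shows $\tfrac{d}{ds}\psi = 0$, so $\psi$ is constant and $f_i(x_i',x_{-i}) - \phi(x_i',x_{-i}) = f_i(x_i,x_{-i}) - \phi(x_i,x_{-i})$. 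Rearranging recovers exactly \eqref{eq_def_potential}, so $\phi$ is a potential and the game is potential.

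The main obstacle is not the algebra but the regularity and topology of the action sets. Converting the vanishing-gradient condition into constancy of $\psi$ in the converse requires $K_i$ to be connected (guaranteed here by convexity) and requires $C^1$ regularity so that the chain rule and the fundamental theorem of calculus apply. If one retained only the difference-based Definition \ref{def_potential} without any differentiability, the partial-derivative statement would need a weaker (directional or distributional) interpretation, so I would flag the smoothness hypothesis as the genuine hinge of the equivalence rather than treating it as implicit.
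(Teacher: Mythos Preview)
Your forward direction is exactly the paper's: pick a curve varying only the $m$-th coordinate of player $i$, apply the chain rule to \eqref{eq-MDAS-pot}, and cancel the nonzero $\dot\gamma_{im}$.

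For the converse you take a slightly different route. The paper simply verifies the parametric form \eqref{eq-MDAS-pot} in one line via the chain rule,
\[
\frac{df_i(\gamma_i(t))}{dt}=\sum_{m=1}^n\frac{\partial f_i}{\partial a_{im}}\,\frac{d\gamma_{im}}{dt}=\sum_{m=1}^n\frac{\partial \phi}{\partial a_{im}}\,\frac{d\gamma_{im}}{dt}=\frac{d\phi(\gamma_i(t))}{dt},
\]
whereas you integrate along a segment in $K_i$ to show $\psi=f_i-\phi$ is constant and recover the finite-difference Definition~\ref{def_potential} directly. Both are correct and rest on the same chain-rule identity; your version makes the role of connectedness (convexity of $K_i$) and $C^1$ regularity explicit, which is a fair point to flag, while the paper's version is shorter because it stops at the parametric characterization it has already declared equivalent.
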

\begin{proof}
If a game is potential game, defining $\gamma_{il}(t)=\gamma_{il}(t_0)$ for all $l\in \{1,2,\ldots,n\}/\{m\}$, according to the \eqref{eq-MDAS-pot} we have $\frac{\partial f_i}{\partial a_{im}}d\gamma_{im}=\frac{\partial \phi}{\partial a_{im}}d\gamma_{im}$. Therefore, because $\gamma_{im}$ is selected arbitrarily, we can conclude that $\frac{\partial f_i}{\partial a_{im}}=\frac{\partial \phi}{\partial a_{im}}$. For the other way around we know that since for all $i\in \mathcal{N}$ and $m \in \{1,2,\ldots,n\}$ we have $\frac{\partial f_i}{\partial a_{im}}=\frac{\partial \phi}{\partial a_{im}}$, then we considering the partial derivatives we can simply write $$\frac{df_i\big(\gamma_i(t)\big)}{dt}=\sum_{m=1}^n\frac{\partial f_i}{\partial a_{im}}\cdot\frac{d\gamma_{im}}{dt}=\sum_{m=1}^n\frac{\partial \phi}{\partial a_{im}}\cdot\frac{d\gamma_{im}}{dt}=\frac{d\phi\big(\gamma_i(t)\big)}{dt}.$$ This completes the proof.
\end{proof}

\begin{thm}\label{Th-MDASPG}
Let us consider $K_i\subset \mathbb{R}^n$ for all $i\in \mathcal{N}$. Therefore, the decision variable is of the following form $$ \mathbf{a}=(a_{11},\ldots,a_{1n},\ldots,a_{i1},a_{i2},\ldots,a_{in},\ldots,a_{N1},\ldots,a_{Nn}),$$
for all $i\in \mathcal{N}$. Assuming $K^N$ is convex, then
\begin{align*}
    &\big(\frac{\partial f_1}{\partial a_{11}},\ldots,\frac{\partial f_1}{\partial a_{1n}},\ldots,\frac{\partial f_N}{\partial a_{N1}},\ldots,\frac{\partial f_N}{\partial a_{Nn}}\big)\nonumber \\
    &=\big(\frac{\partial \phi}{\partial a_{11}},\ldots,\frac{\partial \phi}{\partial a_{Nn}}\big),
\end{align*}
and the game is potential game if and only if
\begin{equation}\label{nec-suf-cond}
    \frac{\partial^2 f_i}{\partial a_{ip}\partial a_{jq}}=\frac{\partial^2 f_j}{\partial a_{jq}\partial a_{ip}},
\end{equation}
for all $i,j \in \mathcal{N}$ and $p,q \in \{1,2,\ldots,n\}$, for some scalar function $\phi$.
\end{thm}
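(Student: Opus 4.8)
The plan is to reduce the claim to the exactness criterion of Theorem~\ref{exact-form} applied in dimension $I=Nn$. First I would introduce the single vector field $G(\mathbf{a})=\big(\frac{\partial f_1}{\partial a_{11}},\ldots,\frac{\partial f_N}{\partial a_{Nn}}\big)$ on the convex set $K^N\subset\mathbb{R}^{Nn}$, treating each pair $(i,p)$ as one flat coordinate index, and form the associated $1$-form $\omega=\sum_{i\in\mathcal{N}}\sum_{p=1}^{n}\frac{\partial f_i}{\partial a_{ip}}\,da_{ip}$ via \eqref{one-form}. The central observation is that each coordinate $a_{ip}$ of the joint variable is \emph{owned} by exactly one player $i$, so the preceding Lemma's condition $\frac{\partial f_i}{\partial a_{ip}}=\frac{\partial \phi}{\partial a_{ip}}$ assigns exactly one equation to every one of the $Nn$ coordinates; collecting these equations over all $i$ and $p$ is precisely the gradient identity $G=\nabla\phi$, i.e.\ the first displayed identity of the theorem, with no coordinate receiving conflicting constraints.

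Next I would invoke the Lemma to convert the parametric definition \eqref{eq-MDAS-pot} into this gradient identity: the Lemma states that the game is potential if and only if $\frac{\partial f_i}{\partial a_{ip}}=\frac{\partial \phi}{\partial a_{ip}}$ for all $i$ and $p$, which is exactly the existence of a scalar potential for $G$, equivalently exactness of $\omega$. With convexity of $K^N$ in hand (the hypothesis of Theorem~\ref{exact-form}) and the coordinate functions $u_{(i,p)}:=\frac{\partial f_i}{\partial a_{ip}}$ being of class $C^1$ (which holds as soon as each $f_i$ is $C^2$), Theorem~\ref{exact-form} then asserts that $\omega$ is exact if and only if $D_{(j,q)}u_{(i,p)}=D_{(i,p)}u_{(j,q)}$ for every pair of flat indices. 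Unwinding the notation gives $D_{(j,q)}u_{(i,p)}=\frac{\partial^2 f_i}{\partial a_{jq}\partial a_{ip}}$ and $D_{(i,p)}u_{(j,q)}=\frac{\partial^2 f_j}{\partial a_{ip}\partial a_{jq}}$, which is exactly condition~\eqref{nec-suf-cond} after using symmetry of mixed partials.

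The last point to verify is that extending the range of \eqref{nec-suf-cond} to all $i,j\in\mathcal{N}$, rather than only to $i\ne j$, introduces nothing false: when $i=j$ the condition reduces to $\frac{\partial^2 f_i}{\partial a_{iq}\partial a_{ip}}=\frac{\partial^2 f_i}{\partial a_{ip}\partial a_{iq}}$, which is automatic by Schwarz's theorem for $C^2$ functions, so these diagonal equalities hold vacuously and may be kept in the statement without harm, while the genuine content is carried by the cross-player pairs $i\ne j$. I expect the only real obstacle to be the index bookkeeping, namely checking that concatenating the players' own-coordinate partials yields a well-defined gradient field on $\mathbb{R}^{Nn}$ and that the pairwise conditions of Theorem~\ref{exact-form}, read through the flat-index dictionary $(i,p)\leftrightarrow$ single coordinate, match \eqref{nec-suf-cond} precisely, including the harmless diagonal cases. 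Once that translation is set up carefully, the equivalence follows immediately by chaining the Lemma with Theorem~\ref{exact-form}.
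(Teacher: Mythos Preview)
Your proposal is correct and follows essentially the same route as the paper: flatten the double index $(i,p)$ into a single index on $\mathbb{R}^{Nn}$, form the associated $1$-form from the own-coordinate partials, and apply Theorem~\ref{exact-form} on the convex set $K^N$ to obtain \eqref{nec-suf-cond}. Your explicit invocation of the preceding Lemma and your remark that the diagonal cases $i=j$ are automatic by Schwarz's theorem are slight elaborations, but the underlying argument is identical to the paper's.
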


\begin{proof}
The order of the decision variables is important in calculating any $f_i,\ i\in \mathcal{N}$. Thus, considering $M=Nn$, let us consider this decision vector in the form of
\begin{align*}
    \mathbf{a}&=(a_{11},\ldots,a_{1n},\ldots,a_{i1},a_{i2},\ldots,a_{in},\ldots,a_{N1},\ldots,a_{Nn})\nonumber \\
    &=(y_1,y_2,\ldots,y_{Nn})=\mathbf{y},
\end{align*}

Differential form 
$$\gamma=\sum_{s=1}^{Nn} \frac{\partial u_s(\mathbf{y})}{\partial y_{s}}dy_{s},$$ where $u_{(l-1)n+r}=f_l$ for all $l\in \mathcal{N}$ and $r\in \{1,2,\ldots,n\}$. 
According to (\ref{nec-suf-exact}) for the convex action space $K^N$ the differential form $\gamma$ is exact if and only if $D_l u_d(\mathbf{y})=D_d u_l(\mathbf{y})$ for all $l,d \in \{1,2,\ldots,Nn\}$. This will lead to this fact that game $\Gamma$ is potential if and only if $$ \frac{\partial^2 f_i}{\partial a_{ip}\partial a_{jq}}=\frac{\partial^2 f_j}{\partial a_{jq}\partial a_{ip}},$$
for all $i,j \in \mathcal{N}$ and $p,q \in \{1,2,\ldots,n\}$. This completes the proof.
\end{proof}

\section{Characterization of Potential Games}\label{Main-res}
As a tool to understand other classes, let us take a look at the Potential Functions from another prospective. According to \eqref{eq-MDAS-pot} for Potential Games, for every $i\in\mathcal{N}$ by integrating both hand side of the equation and using the stokes theorem we can write
\begin{equation}\label{char1-pot}
   \phi\big(\gamma_i(t_0+\epsilon)\big)-\phi\big(\gamma_i(t_0)\big)=f_i\big(\gamma_i(t_0+\epsilon)\big)-f_i\big(\gamma_i(t_0)\big).
\end{equation}
In the non parametric form we can alternatively state for every $z\in K^N$ and $y_i\in K$ that 
\begin{equation}\label{char1-pot-non-para}
   \phi(z_i+y_i,z_{-i})-\phi(z_i,z_{-i})=f_i(z_i+y_i,z_{-i})-f_i(z_i,z_{-i}).
\end{equation}\label{Th-char2-pot-non-para}
This latter definition of potential function and potential games indeed does not need any notions of differentiability and continuity. Following theorem provides some necessary condition for the format of Potential Function of the Potential Games. 

\begin{thm}\label{theorem-3}
If game $\Gamma$ is a Potential Game, then the Potential Function $\phi$ satisfies 
\begin{align}\label{char2-pot-non-para}
    \phi(z+y)-\phi(z)&=\sum_{i=1}^N\big(f_i(z_1+y_1,\ldots,z_i+y_i,z_{i+1},\ldots,z_n)\nonumber \\
    &-f_i(z_1+y_1,\ldots,z_{i-1}+y_{i-1},z_{i},\ldots,z_n)\big).
\end{align}
\end{thm}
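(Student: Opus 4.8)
The plan is to exploit the single-deviation identity \eqref{char1-pot-non-para} along a staircase path from $z$ to $z+y$ that shifts one player's block of coordinates at a time, and then to telescope. First I would introduce the hybrid points $w^{(i)}=(z_1+y_1,\ldots,z_i+y_i,z_{i+1},\ldots,z_N)$ for $i\in\{0,1,\ldots,N\}$, so that $w^{(0)}=z$ and $w^{(N)}=z+y$. The crucial observation is that consecutive hybrids $w^{(i-1)}$ and $w^{(i)}$ differ only in player $i$'s coordinate: in $w^{(i-1)}$ that coordinate equals $z_i$, while in $w^{(i)}$ it equals $z_i+y_i$, and every other block is identical. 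Thus the $i$-th summand on the right-hand side of \eqref{char2-pot-non-para} is exactly $f_i(w^{(i)})-f_i(w^{(i-1)})$.

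Second, I would apply \eqref{char1-pot-non-para} to each such transition. Since that identity holds for every base point in $K^N$ and every admissible deviation, I can invoke it with the ``other players'' fixed at the profile $(z_1+y_1,\ldots,z_{i-1}+y_{i-1},z_{i+1},\ldots,z_N)$ and with player $i$ deviating from $z_i$ to $z_i+y_i$. This yields the per-step relation $\phi(w^{(i)})-\phi(w^{(i-1)})=f_i(w^{(i)})-f_i(w^{(i-1)})$ for each $i\in\{1,\ldots,N\}$, converting each payoff difference on the right of \eqref{char2-pot-non-para} into a difference of $\phi$-values at adjacent hybrid points.

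Finally, I would sum these $N$ identities over $i$. On the left the $\phi$-terms telescope, collapsing to $\phi(w^{(N)})-\phi(w^{(0)})=\phi(z+y)-\phi(z)$, while the right reproduces exactly the claimed sum of payoff differences, which establishes \eqref{char2-pot-non-para}.

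As for difficulty, there is essentially no analytic obstacle here: no continuity or differentiability is invoked, consistent with the remark preceding the theorem that \eqref{char1-pot-non-para} needs no such notions, so the whole argument is combinatorial telescoping. The only thing requiring care is the index bookkeeping, namely verifying that $w^{(i-1)}$ and $w^{(i)}$ genuinely agree in all coordinates except player $i$'s and that the fixed profile fed into \eqref{char1-pot-non-para} matches the hybrid coordinates. I would also resolve one notational point cleanly, reading the upper limit written as $z_n$ in the statement as $z_N$ (the $N$-th player's block), so that $w^{(N)}=z+y$ as intended.
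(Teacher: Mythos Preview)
Your proposal is correct and follows exactly the same route as the paper: define the staircase path $z=w^{(0)}\to w^{(1)}\to\cdots\to w^{(N)}=z+y$, apply \eqref{char1-pot-non-para} at each step, and telescope. The paper's proof is a one-sentence version of what you have written out in detail; your reading of the index $z_n$ as $z_N$ is also the intended one.
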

\begin{proof}
 Considering a path $P:(z)\rightarrow(z_1+y_1,z_{-1})\rightarrow(z_1+y_1,z_2+y_2,z_{-\{1,2\}})\rightarrow \ldots\rightarrow(z+y)$ and writing \eqref{char1-pot-non-para} for every two sequential components of this path and adding up them all together we have \eqref{char2-pot-non-para}. This completes the proof.
\end{proof}
\begin{rem}
The right hand side of \eqref{char2-pot-non-para} always exist. Hence, being able to write \eqref{char2-pot-non-para}, does not mean that game $\Gamma$ is Potential. A sufficient condition for being so, however, is that there exists such non constant function $\phi$ satisfying \eqref{char2-pot-non-para}.
\end{rem}
We choose this particular path $P$ because it might be useful in the study of aggregative potential games. Prior to move to the further analysis of potential games, we explore some characteristics of the right hand side of \eqref{char2-pot-non-para}. Let us consider the right hand side of \eqref{char2-pot-non-para} as $h_P(y,z)$ where $h_P:\mathbb{R}^{2Nn}\rightarrow\mathbb{R}$ and $P$ stands for the particular path described above. 
\begin{definition}[Abnormal Game]\label{ab-game} A game $A=(\mathcal{N},\{f_i,K_i\}_{i\in\mathcal{N}})$ is abnormal game if there is an $i\in\mathcal{N}$ such that for every $x_{-i}\in K_{-i}$ and for every $x_{i}\in K_{i}$ we have $f_i(x_i,x_{-i})=C_i(x_{-i})$ for some real function $C_i:K_{-i}\rightarrow \mathbb{R}$.
\end{definition}
This definition states that in abnormal games there is a person whose action is not affecting her payoff function but it may affect other's cost functions. In this case for this person there is no incentive to make a decision with respect to other players decision. In potential games which are abnormal, the potential function is not sensitive to the decision variable of some agents. Aggregative game can't be an abnormal game due its definition.
Let us consider $y,z\in K^N$
\begin{itemize}
    \item Property 1. If $h_p(y,z)=h_p(y+z,0)-h_p(z,0)$ then, we can write the following expression $h_p\big((y_{-N},0),(z_{-N},0)\big)=h_p\big((y_{-N},0)+(z_{-N},0),0\big)-h_p\big((z_{-N},0),0\big)$. 
    \item Property 2. For aggregative games, the function $h_P(z,0)$ can't be a zero function.
\end{itemize}

\begin{rem}
Since $h_P(0,0)=0$ this function $h_P(z,0)$ can't be constant function apart from $0$.
\end{rem} 

Following theorem illustrates why the Property 2, holds true for aggregative games.
\begin{thm}\label{prop-2}
In aggregative game $\Gamma=(\mathcal{N},\{\tilde{f}_i,K_i\}_{i\in\mathcal{N}})$ the function $h_P(z,0)$ is a non zero function. 
\end{thm}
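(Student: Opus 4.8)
The plan is to expand $h_P(z,0)$ explicitly using the aggregative structure and then isolate the unique dependence on the last player's action, forcing a contradiction with the non-abnormality of aggregative games noted just above.

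First I would substitute the arguments into the definition. Since $h_P(y,z)$ denotes the right-hand side of \eqref{char2-pot-non-para}, evaluating at first argument $z$ and base point $0$ (i.e.\ replacing each $y_k$ by $z_k$ and each $z_k$ by $0$) collapses the telescoping sum to
\[
h_P(z,0)=\sum_{i=1}^N\big(f_i(z_1,\ldots,z_i,0,\ldots,0)-f_i(z_1,\ldots,z_{i-1},0,\ldots,0)\big).
\]
Writing the partial aggregates $S_i=\sum_{k=1}^{i}z_k$ with $S_0=0$, and using the aggregative form $f_i(x)=\tilde f_i\big(x_i,g(\bar x)\big)$, in the first term player $i$ plays $z_i$ with aggregate $S_i$ while in the second term player $i$ plays $0$ with aggregate $S_{i-1}$. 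Hence the $i$-th summand equals $\tilde f_i\big(z_i,g(S_i)\big)-\tilde f_i\big(0,g(S_{i-1})\big)$, so that $h_P(z,0)=\sum_{i=1}^N\big(\tilde f_i(z_i,g(S_i))-\tilde f_i(0,g(S_{i-1}))\big)$.

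Next I would exploit the ordering in the path $P$, in which player $N$ is treated last. Inspecting the summands, the aggregates $S_0,\ldots,S_{N-1}$ and the scalar arguments $z_1,\ldots,z_{N-1}$ are all independent of $z_N$; the only place $z_N$ enters is the single expression $\tilde f_N\big(z_N,g(S_N)\big)$, where $S_N=z_1+\cdots+z_N$. Thus I can write $h_P(z,0)=\psi(z_1,\ldots,z_{N-1})+\tilde f_N\big(z_N,g(z_1+\cdots+z_N)\big)$ for a function $\psi$ not depending on $z_N$. Arguing by contradiction, suppose $h_P(z,0)\equiv 0$ on $K^N$. Then for every fixed tuple $(z_1,\ldots,z_{N-1})$ the map $z_N\mapsto\tilde f_N\big(z_N,g(z_1+\cdots+z_N)\big)$ is constant, equal to $-\psi(z_1,\ldots,z_{N-1})$. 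Since $f_N(x_N,x_{-N})=\tilde f_N\big(x_N,g(x_N+\bar x_{-N})\big)$ and $(z_1,\ldots,z_{N-1})$ ranges over all of $K_{-N}$, this says precisely that player $N$'s payoff is insensitive to its own action for every opponent profile, i.e.\ the game is abnormal in the sense of Definition~\ref{ab-game}. This contradicts the fact, recorded above, that an aggregative game cannot be abnormal; therefore $h_P(z,0)$ is not the zero function.

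The main obstacle is the final linking step: justifying that vanishing of $h_P(z,0)$ genuinely forces abnormality rather than some weaker degeneracy. The argument hinges on two structural points that I would state carefully: (i) that the specific ordering of the path $P$ guarantees $z_N$ appears in exactly one summand, so that constancy in $z_N$ can be read off cleanly, and (ii) that as $(z_1,\ldots,z_{N-1})$ sweeps $K_{-N}$ one recovers \emph{every} opponent profile, matching exactly the universal quantifier in the definition of an abnormal game. Granting the preceding claim that aggregative games are never abnormal, the contradiction then closes the proof; I would take care to invoke that claim explicitly as the key external input rather than re-deriving it.
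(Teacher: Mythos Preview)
Your proposal is correct and follows essentially the same route as the paper: assume $h_P(z,0)\equiv 0$, use the aggregative form together with the ordering of $P$ to isolate the dependence on the last player's action, and conclude that player $N$ is abnormal, contradicting the stated fact that aggregative games are never abnormal. The only cosmetic difference is that the paper specializes to $z=(0,\ldots,0,u,v)$ and substitutes $v=0$ to peel off the $(N-1)$-st term before reading off the conclusion for player $N$, whereas you work directly with a general $z$ and absorb all $z_N$-independent terms into a single function $\psi$.
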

\begin{proof}
We prove this statement via contradiction. Let us consider $z=(0,\ldots,0,u,v)$ and assume $h_P(x,0)$ is zero function for all $x\in K^N$. We have
\begin{align}\label{proof-prop-2}
    h_P(z,0)&=\tilde{f}_{N-1}\big(u,g(u)\big)-\tilde{f}_{N-1}\big(0,g(0)\big)\nonumber \\
    &+\tilde{f}_{N}\big(v,g(u+v)\big)-\tilde{f}_{N}\big(0,g(u)\big)=0.
\end{align}
Considering $v=0$ in \eqref{proof-prop-2}, we find that 
\begin{equation}\label{proof-prop-2-1}
    \tilde{f}_{N-1}\big(u,g(u)\big)-\tilde{f}_{N-1}\big(0,g(0)\big)=0
\end{equation}
Thus we have 
\begin{equation}\label{proof-prop-2-2}
    \tilde{f}_{N}\big(v,g(u+v)\big)=\tilde{f}_{N}\big(0,g(u)\big).
\end{equation}
This implies that for every $x_{-N}$ in this aggregative game, $f_N(x_N,x_{-N})=f_N(0,x_{-N})$ which is aligned with the definition of abnormal games for player $N$. This yields a contradiction and complete the proof.
\end{proof}

\begin{rem}
Excluding aggregative games, in the general case, there might be non abnormal games which are satisfying \eqref{proof-prop-2}, \eqref{proof-prop-2-1} and \eqref{proof-prop-2-2}. For instance, consider for all $i\in \mathcal{N}$ that $f_i(x)=\prod_{j=1}^N x_j$.
\end{rem}

In the following theorem we provide a necessary and sufficient condition for a game to be potential.
\begin{thm}\label{nec-suff-cond-agg-pot}
 The game $G=(\mathcal{N},\{f_i,K_i\}_{i\in\mathcal{N}})$ is Potential Game if and only if
 \begin{align}\label{char2-agg-pot-non-para}
    &W(z+y)-W(z)\nonumber \\
    &=\sum_{i=1}^N\big(f_i(z_1+y_1,\ldots,z_i+y_i,z_{i+1},\ldots,z_n)\nonumber \\
    &-f_i(z_1+y_1,\ldots,z_{i-1}+y_{i-1},z_{i},\ldots,z_n)\big).
\end{align}
for some scalar function $W$. Moreover, $W$ is the Potential Function.
\end{thm}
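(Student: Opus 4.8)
The plan is to prove the two implications of the biconditional separately, noting that the substantive content lies entirely in the sufficiency direction, since necessity is already essentially delivered by Theorem~\ref{theorem-3}.

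For the \emph{necessity} direction, suppose $G$ is a potential game with potential function $\phi$. Theorem~\ref{theorem-3} asserts precisely that $\phi$ satisfies \eqref{char2-pot-non-para}, which is identical to \eqref{char2-agg-pot-non-para} upon setting $W=\phi$. Hence such a $W$ exists (namely the potential function itself), and this direction requires no further work.

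For the \emph{sufficiency} direction, suppose there is a scalar function $W$ satisfying \eqref{char2-agg-pot-non-para} for all $y,z$ for which the terms are defined. The goal is to recover the single-player relation \eqref{char1-pot-non-para} for $W$, i.e.\ to show that for every player $i$ and every admissible $z,y_i$,
$$W(z_i+y_i,z_{-i})-W(z_i,z_{-i}) = f_i(z_i+y_i,z_{-i})-f_i(z_i,z_{-i}),$$
which is exactly the statement that $W$ is a potential function for $G$. The key idea is to specialize \eqref{char2-agg-pot-non-para} to a displacement supported on a single coordinate block: fix $i$ and set $y_j=0$ for all $j\neq i$, leaving $y_i$ arbitrary. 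I would then examine the telescoping sum on the right-hand side term by term. For indices $k<i$ both arguments of the $k$-th summand coincide (since $y_k=0$ and no earlier block has been perturbed), so that summand vanishes; for $k>i$ the perturbed block $z_i+y_i$ already appears in \emph{both} arguments while the $k$-th block is unchanged, so again the summand vanishes; only the $k=i$ summand survives, reducing exactly to $f_i(z_i+y_i,z_{-i})-f_i(z_i,z_{-i})$. Since $z+y=(z_i+y_i,z_{-i})$ under this choice, the left-hand side becomes $W(z_i+y_i,z_{-i})-W(z_i,z_{-i})$, yielding the displayed identity. As $i$, $z$, and $y_i$ were arbitrary, $W$ satisfies the defining relation of a potential function for every player, so $G$ is potential and $W$ is its potential function.

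I expect the main obstacle to be the careful bookkeeping in the term-by-term cancellation of the telescoping sum: one must verify that for the single-block displacement every summand with $k\neq i$ genuinely collapses, being precise about which coordinate blocks carry the perturbation in each of the two arguments of $f_k$. Once this vanishing is established the remainder is immediate. A secondary point worth stating explicitly is that no continuity or differentiability of $W$ or the $f_i$ is invoked anywhere, so the characterization holds in full generality, consistent with the observation following \eqref{char1-pot-non-para}.
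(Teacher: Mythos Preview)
Your proposal is correct and follows essentially the same approach as the paper's proof: necessity by invoking Theorem~\ref{theorem-3} with $W=\phi$, and sufficiency by specializing \eqref{char2-agg-pot-non-para} to $y=(0,\ldots,y_i,\ldots,0)$ so that only the $i$-th summand survives. You have simply spelled out the term-by-term cancellation that the paper leaves implicit.
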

\begin{proof}
If the game $G$ is Potential Game then, considering the Potential Function $\phi$, using \eqref{char2-pot-non-para} we have \eqref{char2-agg-pot-non-para} with $W=\phi$. For the other way around, existence of some non constant scalar function $W$ such that \eqref{char2-agg-pot-non-para} holds, we can check for $y=(0,\ldots,y_i,\ldots,0)$ and arbitrary $z$ and show for every $j\in \mathcal{N}$ that $W(z_i+y_i,z_{-i})-W(z_i,z_{-i})=f_i\big(z_i+y_i,z_{-i}\big)-f_i\big(z_i,z_{-i}\big)$. Hence, The game is Potential Game with Potential Function $W$. This completes the proof.
\end{proof}

Following theorem states an alternative necessary and sufficient condition for a game to be Potential Game.

\begin{thm}\label{nec-suff-cond-2-agg-pot}
Suppose that $K_i$ is symmetric (i.e. if $x_i\in K_i$ then $-x_i\in K_i$ as well) and $0\in K_i$ for all $i \in \mathcal{N}$. The game $G=(\mathcal{N},\{f_i,K_i\}_{i\in\mathcal{N}})$ is Potential Game if and only if $h_P(y,z)=h_P(y+z,0)-h_P(z,0)$ and the potential function is $C-h_P(-z,z)$, where $C$ is some constant.
\end{thm}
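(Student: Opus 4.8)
The plan is to reduce everything to the reformulation already established in Theorem~\ref{nec-suff-cond-agg-pot}, which says that $G$ is potential exactly when some scalar function $W$ obeys $W(z+y)-W(z)=h_P(y,z)$, and that such a $W$ is then a potential function. So the whole task is to show that the stated two-variable property of $h_P$ is equivalent to the existence of such a $W$, and to pin down its closed form. The computation hinges on the single identity $h_P(-z,z)=-h_P(z,0)$, together with the normalization $h_P(0,0)=0$ recorded after Definition~\ref{ab-game}.

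For the forward implication I would assume $G$ is potential with potential $\phi$. By Theorem~\ref{theorem-3} (equivalently \eqref{char2-pot-non-para}) we have $\phi(z+y)-\phi(z)=h_P(y,z)$ for all admissible $y,z$. Specializing the base point to $z=0$ gives $h_P(w,0)=\phi(w)-\phi(0)$ for every $w\in K^N$; substituting this into the claimed right-hand side yields $h_P(y+z,0)-h_P(z,0)=\phi(y+z)-\phi(z)=h_P(y,z)$, so the property holds. Next, specializing the displacement to $y=-z$ gives $\phi(0)-\phi(z)=h_P(-z,z)$, i.e. $\phi(z)=\phi(0)-h_P(-z,z)$; setting $C=\phi(0)$ identifies the potential as $C-h_P(-z,z)$.

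For the reverse implication I would assume the property $h_P(y,z)=h_P(y+z,0)-h_P(z,0)$. Putting $y=-z$ and using $h_P(0,0)=0$ gives $h_P(-z,z)=h_P(0,0)-h_P(z,0)=-h_P(z,0)$, so the candidate $W(z):=C-h_P(-z,z)$ equals $C+h_P(z,0)$. Then $W(z+y)-W(z)=h_P(z+y,0)-h_P(z,0)=h_P(y,z)$, where the last equality is exactly the assumed property. Hence $W$ satisfies the hypothesis of Theorem~\ref{nec-suff-cond-agg-pot}, so $G$ is potential and $W=C-h_P(-z,z)$ is its potential function.

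The routine part is the algebra above; the point that needs care is the well-definedness of all the $h_P$ values that appear, and this is precisely what the standing hypotheses buy us. Evaluating $h_P(-z,z)$ runs the path $P$ from $z$ back to $0$, whose intermediate vertices are the points $(0,\dots,0,z_{k+1},\dots,z_N)$; these lie in $K^N$ because $0\in K_i$ for every $i$, and together with the symmetry of the $K_i$ the hypotheses guarantee that every point arising along the paths defining $h_P(z,0)$ and $h_P(-z,z)$ is admissible for all $z\in K^N$. Thus $C-h_P(-z,z)$ is defined on all of $K^N$, and the main obstacle is simply to keep track of these domain constraints while applying the property at the two special arguments $y=-z$ and base point $0$.
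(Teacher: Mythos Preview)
Your argument is correct and in fact more complete than the paper's own proof. The paper only treats the forward implication: it starts from $h_P(y,z)=\phi(z+y)-\phi(z)$, specializes first to $y=-z$ to obtain $h_P(-z,z)=\phi(0)-\phi(z)$, rewrites $h_P(y,z)=h_P(-z,z)-h_P\big(-(z+y),z+y\big)$, and then massages this into \eqref{final-eq}. You reach the same identity more directly by specializing the base point to $z=0$, which avoids the detour through $-(z+y)$. More importantly, you supply the converse---defining $W(z)=C+h_P(z,0)$ and invoking Theorem~\ref{nec-suff-cond-agg-pot}---which the paper's proof simply omits. Your remarks on why the symmetry and $0\in K_i$ hypotheses are needed for $h_P(-z,z)$ and $h_P(z,0)$ to be well defined are also absent from the paper's version and are a genuine addition.
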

\begin{proof}
We start with $h_P(y,z)=\phi(z+y)-\phi(z)$. Then we have $h_P(-z,z)=\phi(0)-\phi(z)$. As a result, we have $h_P(y,z)=h_P(-z,z)-h_P\big(-(z+y),(z+y)\big)$ Additionally, by considering $z=0$ in the latter term, we can write $h_P(y,0)=h_P(0,0)-h_P(-y,y)$ and substituting $y=0$ in this latter equation we have $h_P(0,0)=0$. We can also write $h_P(0,z)=0$ From this expressions we can also write 
\begin{equation}\label{final-eq}
    h_P(y,z)=h_P(y+z,0)-h_P(z,0).
\end{equation}
\end{proof}

This question may raise naturally that what relation between potential games and duopoly potential games holds. In the sequel we investigate this relation and prove that the necessary and sufficient condition for $N$-player game to be potential is that every $2$-player of this game with some arbitrary vector of other decision variables which are not involved in that sub game, forms a potential game.

For every arbitrary $i,j\in \mathcal{N}$, $y_j,z_j\in K_j$, $y_i,z_i\in K_i$, and $z_{-\{i,j\}}\in K_{-\{i,j\}}$  let us define $h_{ij}(y_j,y_i,z_j,z_i,;z_{-\{i,j\}})$ as follows

\begin{align}\label{char-agg-pot-def-1}
    h_{ij}(y_j,y_i,z_j,z_i,;z_{-\{i,j\}})&=f_{i}\big(z_i+y_i,z_j;z_{-\{i,j\}}\big)-f_{i}\big(z_i,z_j;z_{-\{i,j\}}\big)\nonumber \\
    &+f_{j}\big(z_j+y_j,z_i+y_i,z_{-\{i,j\}}\big)-f_{j}\big(z_j,z_i+y_i,z_{-\{i,j\}}\big).
\end{align}

Consider a path $P^*$ such that last two changes take place in $i^*$ and $j^*$, respectively. Let us assume $y=(0,\ldots,y_{i^*},0,\ldots,y_{j^*},0,\ldots)$ and $z=(z_1,\dots,z_N)$ such that $z_{-\{i^*,j^*\}}=z^*_{-\{i^*,j^*\}}$. We have in potential games that
\begin{align}\label{theorem-item-3-1}
    &h_{i^*j^*}\big(y_{j^*},y_{i^*},z_{j^*},z_{i^*};z^*_{-\{i^*,j^*\}}\big)=h_{P^*}(y,z)=h_{P^*}(y+z,0)-h_{P^*}(z,0),
\end{align}
for the right hand side of \eqref{theorem-item-3-1} we can write
\begin{align}\label{theorem-item-3-1-2}
    &h_{P^*}(y+z,0)-h_{P^*}(z,0)=h_{i^*j^*}\big(y_{j^*}+z_{j^*},y_{i^*}+z_{i^*},0,0;z^*_{-\{i^*,j^*\}}\big)-h_{i^*j^*}\big(z_{j^*},z_{i^*},0,0;z^*_{-\{i^*,j^*\}}\big).
\end{align}

For the finite path $\mathcal{Q}=(q^0,\ldots,q^L)$, where $q^l \in K^N$ for $l\in\{0,1,\ldots,L\}$, and for a vector $f=(f_1,\ldots,f_N)$ of functions $f_i:K^N\rightarrow \mathbb{R}$, one can define

\begin{equation}
    I(\mathcal{Q},f)=\sum_{e=1}^L\big(f_{i_e}(q_e)-f_{i_e}(q_{e-1})\big),
\end{equation}
where, $i_e$ is the unique deviator at step $e$ (i.e., $q_{e}^{i_e}\neq q_{e}^{i_{e-1}}$). The path $\mathcal{Q}$ is called close if $q_0=q_N$. Moreover, it is simple closed path if it does not cross itself at any intermediate point $l\in\{0,1,\ldots,L-1\}$. The length of a simple closed path is the number of different vertices of this path.
It is proven that game $\mathcal{F}=(\mathcal{N},\{f_i,K_i\}_{i\in\mathcal{N}})$ is potential if and only if for every finite simple closed path $\mathcal{Q}$ of length $4$, $I(\mathcal{Q},f)=0$ \cite{monderer1996potential}.

\begin{thm}\label{theorem 7}
The $N$-player game $\Gamma=(\mathcal{N},\{f_i,K_i\}_{i\in\mathcal{N}})$ is potential if and only if for all $i^*,j^*\in \mathcal{N}$ and $z^*_{-\{i^*,j^*\}}\in K_{-\{i^*,j^*\}}$ we have 
\begin{align}\label{char-agg-pot-def-0}
    &h_{i^*j^*}(y_{j^*},y_{i^*},z_{j^*},z_{i^*};z^*_{-\{i^*,j^*\}})=h_{i^*j^*}(y_{j^*}+z_{j^*},y_{i^*}+z_{i^*},0,0;z^*_{-\{i^*,j^*\}})-h_{i^*j^*}(z_{j^*},z_{i^*},0,0;z^*_{-\{i^*,j^*\}}).
\end{align}
\end{thm}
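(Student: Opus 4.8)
The plan is to read the identity \eqref{char-agg-pot-def-0} as the precise statement that, for each pair $\{i^*,j^*\}$ and each frozen value $z^*_{-\{i^*,j^*\}}$ of the remaining coordinates, the two-player subgame played by $i^*$ and $j^*$ is itself a potential game, with local potential $\Phi(a,b):=h_{i^*j^*}(b,a,0,0;z^*_{-\{i^*,j^*\}})$. Under this reinterpretation the forward implication reduces to restricting a global potential to the plane of the pair, while the reverse implication is delivered by the Monderer--Shapley criterion quoted above, once one observes that every simple closed path of length $4$ lives inside a single such two-player subgame.

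For the forward direction, I would suppose $\Gamma$ is potential with potential $\phi$ and fix $z^*_{-\{i^*,j^*\}}$. Applying Definition \ref{def_potential} to single deviations of $i^*$ and of $j^*$ shows that the restriction of $\phi$ to the $(i^*,j^*)$-plane is a potential for the subgame. Substituting this into the definition \eqref{char-agg-pot-def-1} of $h_{i^*j^*}$ and telescoping the two successive deviations gives $h_{i^*j^*}(y_{j^*},y_{i^*},z_{j^*},z_{i^*};z^*)=\phi(z_{i^*}+y_{i^*},z_{j^*}+y_{j^*};z^*)-\phi(z_{i^*},z_{j^*};z^*)$, so $h_{i^*j^*}$ depends only on its two endpoints. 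Evaluating the right-hand side of \eqref{char-agg-pot-def-0} the same way, both $h_{i^*j^*}$-terms telescope and the common base value $\phi(0,0;z^*)$ cancels, reproducing the left-hand side exactly; this is essentially the computation already recorded in \eqref{theorem-item-3-1}--\eqref{theorem-item-3-1-2}.

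For the reverse direction, I would assume \eqref{char-agg-pot-def-0} holds for all pairs and all $z^*$, and first extract two single-deviation potential equations by degenerate substitutions. Setting $y_{j^*}=0$, the two $f_{j^*}$-terms in \eqref{char-agg-pot-def-1} cancel and \eqref{char-agg-pot-def-0} collapses to
\[
f_{i^*}(z_{i^*}+y_{i^*},z_{j^*};z^*)-f_{i^*}(z_{i^*},z_{j^*};z^*)=\Phi(z_{i^*}+y_{i^*},z_{j^*})-\Phi(z_{i^*},z_{j^*}),
\]
which is the potential condition for player $i^*$; symmetrically, $y_{i^*}=0$ cancels the two $f_{i^*}$-terms and yields the potential condition for player $j^*$ with the same $\Phi$. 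Hence each two-player subgame is potential. I would then invoke the quoted Monderer--Shapley criterion: a simple closed path of length $4$ must deviate in exactly two players' coordinates, each exactly twice, and so traces a rectangle $(a,b)\to(a',b)\to(a',b')\to(a,b')\to(a,b)$ in the plane of some pair $\{i^*,j^*\}$ with all other coordinates frozen at some $z^*$. Rewriting each of the four increments of $I(\mathcal{Q},f)$ as a difference of $\Phi$-values for that subgame, the four differences telescope around the loop to $0$. Since this holds for every length-$4$ simple closed path, Monderer--Shapley gives that $\Gamma$ is a potential game.

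The hard part will be the reverse direction, and within it two points deserve care. First, I must verify that the degenerate substitutions $y_{j^*}=0$ and $y_{i^*}=0$ really do produce the intended cancellations in \eqref{char-agg-pot-def-1}, so that a single function $\Phi$ serves simultaneously as a potential for both players of the subgame. Second, I must justify that length-$4$ simple closed paths are exactly the two-player rectangles, since this is precisely what lets the pairwise condition feed into the Monderer--Shapley test. A secondary bookkeeping concern is the admissibility of the shifted points $z_i+y_i$ as elements of $K_i$; I would dispatch this by reparametrizing base and increment within the action sets, as is done elsewhere in the paper, so that no point outside $K^N$ is ever evaluated.
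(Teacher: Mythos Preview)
Your proposal is correct and follows essentially the same route as the paper: both directions hinge on the Monderer--Shapley length-$4$ closed-path criterion, with the forward direction obtained from the computation already displayed in \eqref{theorem-item-3-1}--\eqref{theorem-item-3-1-2}, and the reverse direction by showing $I(\mathcal{Q},f)=0$ on every two-player rectangle. The only cosmetic difference is that you first name $\Phi(a,b)=h_{i^*j^*}(b,a,0,0;z^*)$ as a subgame potential via the degenerate substitutions $y_{j^*}=0$ and $y_{i^*}=0$ and then telescope, whereas the paper applies \eqref{char-agg-pot-def-0} directly to each of the four legs in \eqref{char-agg-pot-theorem-2} and observes the resulting cancellation---the underlying algebra is identical.
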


\begin{proof}
Let $N$-player game $\Gamma=(\mathcal{N},\{f_i,K_i\}_{i\in\mathcal{N}})$ is potential. Hence, for every path $\mathcal{Q}$ of length $4$ we have $I(\mathcal{Q},f)=0$. Let 
\begin{align*}
    \mathcal{Q}:(z_1,z_2,\ldots,z_N)&\rightarrow(z_1,\ldots,z_{i^*}+y_{i^*},\ldots,z_N) \nonumber \\
    &\rightarrow (z_1,\ldots,z_{i^*}+y_{i^*},\ldots,z_{j^*}+y_{j^*},\ldots,z_N)\nonumber \\
    &\rightarrow(z_1,\ldots,z_{i^*},\ldots,z_{j^*}+y_{j^*},\ldots,z_N)\nonumber \\
    &\rightarrow(z_1,\ldots,z_{i^*},\ldots,z_{j^*},\ldots,z_N).
\end{align*} 
Moreover, let us consider $Z=(z_1,z_2,\ldots,z_N)$ and $Y=(0,\ldots,y_{i^*},\ldots,y_{j^*},\ldots,0)$. Because of the fact that the game $\Gamma$ is potential then \eqref{theorem-item-3-1} holds and since \eqref{theorem-item-3-1-2} holds in general, we can conclude that \eqref{char-agg-pot-def-0} holds true. For the other way around, let assume \eqref{char-agg-pot-def-0} holds for every $i^*,j^*\in \mathcal{N}$ and $z^*_{-\{i^*,j^*\}}\in K_{-\{i^*,j^*\}}$, then for arbitrary simple closed path of length $4$ of $\mathcal{Q}$ we have 
\begin{align}\label{char-agg-pot-theorem-2}
    I(\mathcal{Q},f)&=h_{i^*j^*}(0,y_{i^*},z_{j^*},z_{i^*};z_{-\{i^*,j^*\}})\nonumber\\
    &+h_{i^*j^*}(y_{j^*},0,z_{j^*},z_{i^*}+y_{i^*};z_{-\{i^*,j^*\}})\nonumber\\
    &+h_{i^*j^*}(0,-y_{i^*},z_{j^*}+y_{j^*},z_{i^*}+y_{i^*};z_{-\{i^*,j^*\}})\nonumber\\
    &+h_{i^*j^*}(-y_{j^*},0,z_{j^*}+y_{j^*},z_{i^*};z_{-\{i^*,j^*\}})\nonumber\\
    &=0,
\end{align}
according to multiple time applying of \eqref{char-agg-pot-def-0}. Since $\mathcal{Q}$ is arbitrary path of length $4$, in the action space for the $N$-player game, this game is potential. This completes the proof.
\end{proof}

Following example is designed to verify the above theorem through the 3-player Cournot game.

\textbf{Example 1.} Suppose that we have following utility functions:
\begin{align}
    &f_1(x_{1},x_{2},x_{3})= (A-B\bar{x})x_1-Cx_1, \\
    &f_2(x_{1},x_{2},x_{3})= (A-B\bar{x})x_2-Cx_2, \\
    &f_3(x_{1},x_{2},x_{3})= (A-B\bar{x})x_3-Cx_3.
\end{align}
We choose players $1,2$ and rewrite the \eqref{char-agg-pot-def-0} to check if it holds. We have
\begin{align}\label{EXM-2-1}
    h_{12}(y_{2},y_{1},z_{2},z_{1};z_3)&=(A-B(\bar{z}+y_1))(z_1+y_1)-C(z_1+y_1)\nonumber \\
    &-\big((A-B\bar{z})z_1-Cz_1\big)\nonumber \\
    &+(A-B(\bar{z}+y_1+y_2))(z_2+y_2)-C(z_2+y_2)\nonumber \\
    &-\big((A-B(\bar{z}+y_1))z_2-Cz_2\big).
\end{align}

Moreover, we have
\begin{align}\label{EXM-2-2}
   &h_{i^*j^*}(y_{j^*}+z_{j^*},y_{i^*}+z_{i^*},0,0;k^*_{-\{i^*,j^*\}})-h_{i^*j^*}(z_{j^*},z_{i^*},0,0;k^*_{-\{i^*,j^*\}})\nonumber \\
   &=(A-B(z_1+z_3+y_1))(z_1+y_1)-C(z_1+y_1)\nonumber \\
   &+(A-B(\bar{z}+y_1+y_2))(z_2+y_2)-C(z_2+y_2)\nonumber \\
   &-\big((A-B(z_1+z_3))(z_1)-C(z_1)\big)\nonumber \\
   &-\big((A-B(\bar{z}))(z_2)-C(z_2)\big).
\end{align}
We can verify that \eqref{EXM-2-1} and \eqref{EXM-2-2} are equivalent. This means the game is potential.

Since the condition of being potential can be boiled down to check some equality condition in terms of $h_{ij}$ for every $i,j \in \mathcal{N}$, we might be able to find the potential function of a $N$-player potential games in terms of these functions. Following theorem provide a characterization of potential function of $N$-player potential games. Prior to proceeding with the theorem we define $\hat{v}_i$  by $(v_1,\ldots,v_i,0,\ldots,0)$. Let consider path $P$ in theorem \ref{theorem-3}, and $P'$, $P''$ the first 3 and the first 2 steps of path $P$ respectively, and $P^*$ as defined previously, then

\begin{thm}\label{theorem 8}
The $N$-player game $\Gamma=(\mathcal{N},\{f_i,K_i\}_{i\in\mathcal{N}})$ is potential with potential function $\phi$ then if $N=2K+1$ for some $K\in \mathbb{N}$ we have
\begin{align}\label{theorem 8-1}
    \phi(0)&+h_{P'}((z_1,z_2,z_3),0)+\sum_{i=2}^K h_{2i,2i+1}(z_{2i+1},z_{2i},0,0;\hat{z}_{2i-1})=\phi(z),
\end{align}
else if $N=2K$ we have
\begin{align}\label{theorem 8-2}
    \phi(0)&+h_{P''}((z_1,z_2),0)+\sum_{i=2}^K h_{2i-1,2i}(z_{2i},z_{2i-1},0,0;\hat{z}_{2i-2})=\phi(z).
\end{align}
\end{thm}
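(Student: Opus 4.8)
The plan is to reduce the statement to the single identity $\phi(z)=\phi(0)+h_P(z,0)$ and then to recognize the block structure of $h_P(z,0)$ in terms of the two–player functions $h_{ij}$. First I would instantiate the potential identity \eqref{char2-pot-non-para} (equivalently Theorem \ref{theorem-3}) at base point $0$ with displacement $z$; since $\phi(z+y)-\phi(z)=h_P(y,z)$, taking the base to $0$ and the displacement to $z$ yields $\phi(z)-\phi(0)=h_P(z,0)$, where, writing the partial vectors $\hat z_s=(z_1,\dots,z_s,0,\dots,0)$ with $\hat z_0=0$, one has the one–step expansion $h_P(z,0)=\sum_{s=1}^N\big(f_s(\hat z_s)-f_s(\hat z_{s-1})\big)$. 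Thus it suffices to show that the left–hand sides of \eqref{theorem 8-1} and \eqref{theorem 8-2} reproduce exactly $\phi(0)+h_P(z,0)$.

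The key computation is to unfold each pairwise block by substituting its prescribed arguments into the definition \eqref{char-agg-pot-def-1}. For the odd case $N=2K+1$, I would insert $y_{2i+1}=z_{2i+1}$, $y_{2i}=z_{2i}$, both base slots equal to $0$, and the spectator coordinates $\hat z_{2i-1}$ into $h_{2i,2i+1}$; because the two base slots vanish and the fixed coordinates are precisely $\hat z_{2i-1}$, the four evaluations collapse onto the partial vectors $\hat z_{2i-1},\hat z_{2i},\hat z_{2i+1}$, giving $h_{2i,2i+1}(z_{2i+1},z_{2i},0,0;\hat z_{2i-1})=\big(f_{2i}(\hat z_{2i})-f_{2i}(\hat z_{2i-1})\big)+\big(f_{2i+1}(\hat z_{2i+1})-f_{2i+1}(\hat z_{2i})\big)$. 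These are exactly the summands with index $s=2i$ and $s=2i+1$ in the expansion of $h_P(z,0)$. Summing over $i=2,\dots,K$ therefore recovers the summands $s=4,\dots,N$, while $h_{P'}\big((z_1,z_2,z_3),0\big)$, being the contribution of the first three steps of $P$, supplies the summands $s=1,2,3$; their total is $h_P(z,0)$, which establishes \eqref{theorem 8-1}.

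For the even case $N=2K$ I would run the identical argument with the pairing $(2i-1,2i)$: substituting $y_{2i}=z_{2i}$, $y_{2i-1}=z_{2i-1}$, zero base slots, and spectators $\hat z_{2i-2}$ into \eqref{char-agg-pot-def-1} gives $h_{2i-1,2i}(z_{2i},z_{2i-1},0,0;\hat z_{2i-2})=\big(f_{2i-1}(\hat z_{2i-1})-f_{2i-1}(\hat z_{2i-2})\big)+\big(f_{2i}(\hat z_{2i})-f_{2i}(\hat z_{2i-1})\big)$, i.e. the summands $s=2i-1,2i$. Running $i=2,\dots,K$ yields $s=3,\dots,N$, and $h_{P''}\big((z_1,z_2),0\big)$ contributes $s=1,2$, so again the blocks reassemble $h_P(z,0)$ and \eqref{theorem 8-2} follows.

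I expect the only real obstacle to be the bookkeeping in the middle step: one must verify that the argument ordering in \eqref{char-agg-pot-def-1} — specifically which slot carries the displacement $y$ and which carries the base, and how the spectator vector $\hat z_{2i-1}$ (resp. $\hat z_{2i-2}$) fills the remaining coordinates — is exactly what is needed for the four $f$-evaluations in each block to land on consecutive partial vectors $\hat z_s$. There is no analytic difficulty, since continuity and differentiability play no role here (as noted after \eqref{char1-pot-non-para}); once the index matching is confirmed the two claimed identities are immediate, and the leading blocks $h_{P'}$ and $h_{P''}$ are introduced precisely to absorb the three (resp. two) initial coordinates that the paired blocks, starting at $i=2$, leave uncovered.
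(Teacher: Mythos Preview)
Your argument is correct: once $\phi(z)-\phi(0)=h_P(z,0)$ is in hand, the direct verification that each block $h_{2i,2i+1}(z_{2i+1},z_{2i},0,0;\hat z_{2i-1})$ (resp.\ $h_{2i-1,2i}(\cdots)$) equals the pair of telescoping summands $f_{2i}(\hat z_{2i})-f_{2i}(\hat z_{2i-1})+f_{2i+1}(\hat z_{2i+1})-f_{2i+1}(\hat z_{2i})$ is just a matter of substituting into \eqref{char-agg-pot-def-1}, and your bookkeeping of the argument slots is accurate.

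The route, however, differs from the paper's. The paper argues inductively from the top down: it invokes \eqref{theorem-item-3-1}, which rests on Theorem~\ref{nec-suff-cond-2-agg-pot} (the identity $h_{P^*}(y,z)=h_{P^*}(y+z,0)-h_{P^*}(z,0)$), to peel off the last pair $(N-1,N)$ and reduce to the potential function of an $(N-2)$--player game, then repeats. Your argument instead works bottom--up and purely algebraically: you never need Theorem~\ref{nec-suff-cond-2-agg-pot} or the derived relations \eqref{theorem-item-3-1}--\eqref{theorem-item-3-1-2}, only Theorem~\ref{theorem-3} and the definition \eqref{char-agg-pot-def-1}. This makes your proof shorter and slightly more self--contained, and it also sidesteps the symmetry hypothesis on $K_i$ that Theorem~\ref{nec-suff-cond-2-agg-pot} carries; the paper's inductive framing, on the other hand, makes explicit how the formula arises by successively stripping two--player subgames, which is the structural picture motivating Theorem~\ref{theorem 7}.
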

\begin{proof}
We prove the statement for $N=2K+1$ for some $K\in \mathbb{N}$. The proof for the other case is identical to latter case.
Let assume in path $P^*$, $i^*=2K$ and $j^*=2K+1$. Utilizing \eqref{theorem-item-3-1} with $y=(0,\ldots,z_{N-1},z_{N})$ and $z=(z_1,\ldots,z_{N-2},0,0)$, we have
\begin{align}\label{theorem 8-3}
    h_{2K,2K+1}\big(z_{2K+1},z_{2K},0,0;\hat{z}_{\{2K-1\}}\big)&=h_{P^*}(y,z)\nonumber \\
    &=h_{P^*}((z_1,\ldots,z_{N-2},z_{N-1},z_{N}),0)\nonumber \\
    &-h_{P^*}((z_1,\ldots,z_{N-2},0,0),0).
\end{align}
Additionally, we know that by theorem \ref{nec-suff-cond-2-agg-pot} we have
\begin{align}\label{theorem 8-4}
    &\phi(z_1,\ldots,z_{N-2},z_{N-1},z_{N})=\phi(0)+h_{P^*}((z_1,\ldots,z_{N-2},z_{N-1},z_{N}),0).
\end{align}
Rearranging \eqref{theorem 8-3} and substituting in \eqref{theorem 8-4} we have
\begin{align}\label{theorem 8-5}
    \phi(z_1,\ldots,z_{N-2},z_{N-1},z_{N})&=\phi(0)+h_{P^*}((z_1,\ldots,z_{N-2},0,0),0)\nonumber \\
    &+h_{N-1,N}\big(z_{N},z_{N-1},0,0;\hat{z}_{\{N-2\}}\big).
\end{align}
In \eqref{theorem 8-5} since $N-2$ is again an odd number and the $N-2$-player game which is obtained by omitting players $N$ and $N-1$ from game $\Gamma$, and since conditions of theorem \ref{theorem 7} is held for this $N-2$-player game, it is potential and we can repeat the argument. Continuing this process we eventually reach \eqref{theorem 8-1}. This completes the proof.
\end{proof}

\textbf{Example 2.} Suppose that we have following utility functions:
\begin{align}
    &f_1(x_{1},x_{2},x_{3},x_{4})= (A-B\bar{x})x_1-Cx_1, \\
    &f_2(x_{1},x_{2},x_{3},x_{4})= (A-B\bar{x})x_2-Cx_2, \\
    &f_3(x_{1},x_{2},x_{3},x_{4})= (A-B\bar{x})x_3-Cx_3,\\
    &f_4(x_{1},x_{2},x_{3},x_{4})= (A-B\bar{x})x_4-Cx_4.
\end{align}
Using theorem \ref{theorem 8}, we have
\begin{align}\label{example-2}
\phi(x)=\phi(0)&+(A-Bx_1)x_1-Cx_1\nonumber \\
&+(A-B(x_1+x_2))x_2-Cx_2\nonumber \\
&+(A-B(x_1+x_2+x_3))x_3-Cx_3\nonumber \\
&+(A-B(x_1+x_2+x_3+x_4))x_4-Cx_4.
\end{align}
We can see that $\phi(x_1+y_1,x_2,x_3,x_4)-\phi(x_1,x_2,x_3,x_4)=(A-B(\bar{x}))y_1-Bx_1y_1-By_1^2-Cy_1$. It is not difficult to show that  $f_1(x_1+y_1,x_2,x_3,x_4)-f_1(x_1,x_2,x_3,x_4)$ is equal to the same value.
\section{Discussion}\label{sec:discussion}
Theorem \ref{theorem 7} states that for a game to be potential, for every point $z^*_{-\{i,j\}}$ in the action space $K_{-\{i,j\}}$ there is a function equation must be hold, and vice versa, if the function equation holds in every point $z^*_{-\{i,j\}}\in K_{-\{i,j\}}$, the game must be potential. This function equation obtained in theorem \ref{theorem 7} is comparable with \eqref{nec-suf-cond} for games with smooth payoff functions. In other words, this latter condition for games with non continuous payoff functions can be extended to the theorem \ref{theorem 7} of the current study. It should be also noted that by looking at conditions stated in theorem \ref{nec-suff-cond-2-agg-pot}, at the first look it seems the action space needs to be a box which is symmetric with respect to its center, to be able to proceed with the rest of theorems built upon theorem \ref{nec-suff-cond-2-agg-pot}. However, in case of constrained action space as long as payoff functions are non infinity along path $\mathcal{Q}$ introduced in \ref{theorem 7}, we can repeat all the results for this case as long as $x$ and $x+y$ are in the constrained action space by focusing on a hypercube containing the entire constrained action space. Therefore, for the problem with constrained action space theorem \ref{theorem 7} provides a necessary condition. Additionally, due to its nature, the condition in this theory is sufficient because then we can show $I(\mathcal{Q},f)=0$ in the hypercube containing the constrained action space. 

In aggregative games instead of $z^*_{-\{i,j\}}$, the term $\bar{z}^*_{-\{i,j\}}$ appears in the derivations. Therefore, the conditions in theorem \ref{theorem 7}, in particular \eqref{char-agg-pot-def-0} needs only to be satisfied for every constant $\bar{k}^*_{-\{i,j\}} \in \bar{K}_{-\{i,j\}}$. Due to the fact that in aggregative games, from point of view of agent $i,j$ it does not make any difference, in unconstrained problem (with unbounded action space) one can assign $\bar{k}^*_{-\{i,j\}}$ to action of one player apart from $i,j$, let say $l\in \mathcal{N}$. As a result, one can geometrically interpret the condition of theorem \ref{theorem 7} to this that the condition \eqref{char-agg-pot-def-0} only needs to be held for all $i,j \in \mathcal{N}$ along 
\begin{enumerate}
    \item direction defined by decision variable of player $l^{(1)}$ in the action space where non of $i,j$ are $l^{(1)}$,
    \item direction defined by decision variable of player $l^{(2)}$ in the action space where one of $i,j$ is $l^{(1)}$,
    \item direction defined by decision variable of player $l^{(3)}$ in the action space where one of $\{i,j\}$ are $\{l^{(1)},l^{(2)}\}$,
\end{enumerate}
where, $l^{(1)},l^{(2)},l^{(3)}$ are orthogonal axes.
\section{Conclusions}\label{sec:conclusion}
In this paper we focused on class of potential games and derived a necessary and sufficient condition for games to fall under this class of games. We stepped further and simplified the general criteria we obtained for potential games for class of aggregative games. This relation completely describes aggregative potential games in sense of every two player's payoff functions coupling behavior. We checked the condition through a $3$-player Cournot game. We also examined the form of potential function for potential games through an example of $4$-player aggregative games.

\end{document}